\newcommand{\keywords}[1]{\par\addvspace\baselineskip
\noindent\keywordname\enspace\ignorespaces#1}
\newcommand{\N}{\ensuremath{\mathbb{N}}}
\newcommand{\abs}[1]{\ensuremath{\left| #1 \right|}}
\newcommand{\perm}{P}
\begin{document}

\mainmatter  

\title{Even flying cops should think ahead}

\titlerunning{Even flying cops should think ahead}

%
%
\author{Anders Martinsson
\and Florian Meier \and Patrick Schnider \and Angelika Steger }
%

\institute{Department of Computer Science\\
  ETH Zürich, Switzerland\\
  \mailsa\\
}

%
%

\toctitle{Lecture Notes in Computer Science}
\tocauthor{Authors' Instructions}
\maketitle

\begin{abstract}
We study the entanglement game, which is a version of cops and robbers, on sparse graphs. While the minimum degree of a graph $G$ is a lower bound for the number of cops needed to catch a robber in $G$, we show that the required number of cops can be much larger, even for graphs with small maximum degree.
In particular, we show that there are 3-regular graphs where a linear number of cops are needed.

\keywords{Cops and robbers, entanglement game, probabilistic method}
\end{abstract}

\section{Introduction}\label{sec:Introduction}
The game of cops and robbers was first introduced and popularised in the 1980's by Aigner and Fromme~\cite{aigner84}, Nowakowski and Winkler~\cite{nowakowski83} and Quilliot~\cite{quilliot78}.
Since then, many variants of the game have been studied, for example where cops can catch robbers from larger distances~(\cite{bonato10}), the robber is allowed to move at different speeds~(\cite{alon15,frieze12}), or the cops are lazy, meaning that in each turn only one cop can move~(\cite{bal13,bal16}). 

In this paper we consider 
the \emph{entanglement game}, introduced by Berwanger and Gr{\"a}del~\cite{berwanger04} that is the following version of the cops and robbers game on a (directed or undirected) graph $G$.
First, the robber chooses a starting position and the $k$ cops are outside the graph. In every turn, the cops can either stay where they are, or they can fly one of them to the current position of the robber.
Regardless of whether the cops stayed or one of them flew to the location of the robber, the robber then has to move to a neighbor of his current position that is not occupied by a cop.
If there is no such neighbor, the cops win. The robber wins if he can run away from the cops indefinitely.
The \emph{entanglement number} of a graph $G$, denoted by $\text{ent}(G)$, is the minimal integer $k$ such that $k$ cops can catch a robber on $G$.
In order to get accustomed to the rules of the game, it is a nice exercise to show that the entanglement number of an (undirected) tree is at most 2.

The main property that distinguishes the entanglement game from other variants of cops and robbers is the restriction that the cops are only allowed to fly to the current position of the robber.
This prevents the cops from cutting off escape routes or forcing the robber to move into a certain direction.
As we will show in this paper, it is this restriction that enables the robber to run away from many cops.

In a similar way to how the classical game of cops and robbers can be used to describe the treewidth of a graph, the entanglement number is a measure of how strongly the cycles of the graph are intertwined, see~\cite{berwanger12}.
Just like many problems can be solved efficiently on graphs of bounded treewidth, Berwanger and Gr{\"a}del~\cite{berwanger04} have shown that parity games of bounded entanglement can be solved in polynomial time.

As the cops do not have to adhere to the edges of the graph $G$ in their movement, adding more edges to $G$ can only help the robber.
In fact, it can be seen easily that on the complete graph $K_n$, $n-1$ cops are needed to catch the robber.
Furthermore, observe that the  minimum degree of the graph $G$ is a lower bound on the entanglement number, as otherwise the robber will always find a free neighbor to move to.
These observations seem to suggest that, on sparse graphs, the cops should have an advantage and therefore few cops would suffice to catch the robber. Indeed, on 2-regular graphs, it is easily checked that three cops can always catch the robber.

Motivated by this, we study the entanglement game on several classes of sparse graphs. We show that for sparse Erd\H{o}s-R\'enyi random graphs, with high probability linearly many cops are needed. We then apply similar ideas to show our main result, c.f.~Theorem~\ref{thm:random_matchings}, which states that the union of three random perfect matchings is with high probability such that the robber can run away from $\alpha n$ cops, for some constant $\alpha>0$. Further, we show in Theorem \ref{thm:upper_bound} that for any $3$-regular graph $\lfloor\frac{n}{4}\rfloor+4$ cops suffice to catch the robber. Finally,  we consider the entanglement game for the family of graphs given by a Hamilton cycle and a perfect matching connecting every vertex to its diagonally opposite vertex. 
For these graphs it may seem that the diagonal ``escape'' edges are quite nice for the robber. This, however, is not so: we show in Proposition~\ref{prop:DGn} that for these graphs six cops are always sufficient. However, we also show, cf.~Corollary~\ref{thm:random_matchings}, that if we replace this specific perfect matching by a random one, then with high probability a linear number of cops is needed.
Thus, in contrast to the intuition that sparse graphs are advantageous for the cops, they are often not able to use the sparsity to their advantage.

As mentioned previously, the entanglement game and entanglement number is also defined for directed graphs, the difference being that the robber moves to a successor of his current position.
In fact, motivated by an application in logic~(\cite{berwanger04logic}), the authors of the original papers about the entanglement game~(\cite{berwanger04,berwanger12}) focused on the directed version.
As a corollary of our main result we construct directed graphs of maximum (total) degree 3 which again require linearly many cops.
Our result on Erd\H{o}s-R\'enyi graphs also easily carries over to directed graphs.

\section{Results} \label{sec:Results}
In this section, we present and motivate our results.
In order to increase the readability, we postpone the proofs to Section \ref{sec:proof_main_thm}.
We start by analyzing the entanglement game on sparse random graphs.

\begin{theorem}\label{thm:Gnp}
For every $0<\alpha < 1$ there exists a constant $C=C(\alpha) >0$ such that for any $p\ge C/n$ the random graph $G_{n,p}$ is with high probability such $\alpha n$ cops do not suffice. The same result holds for directed random graphs.
\end{theorem}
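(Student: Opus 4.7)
The plan is to prove $\text{ent}(G_{n,p}) > \alpha n$ w.h.p.\ by exhibiting an explicit evasion strategy for the robber that exploits the local expansion of $G_{n,p}$.

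First, I would formulate a deterministic sufficient condition for $\text{ent}(G) > \alpha n$. My candidate is a "one-step look-ahead" property: there exists a subset $W \subseteq V(G)$ such that for every $v \in W$ and every $S \subseteq V(G)$ with $|S| \le \alpha n$ and $v \notin S$, there is a neighbor $u \in N(v) \cap (W \setminus S)$ with $|N(u) \cap (W \setminus S)| \ge 2$. Given this property, the robber plays greedily: starting at some $v_0 \in W$ and at each turn with current cop set $S_t$, he moves to a vertex $u$ certified by the property at $(v,S_t)$. Because the cop set gains at most one vertex per turn, the "two further neighbors in $W \setminus S_t$" guarantee that $u$ still has a neighbor in $W \setminus S_{t+1}$, and reapplying the property at $(u, S_{t+1})$ (which holds by the universal quantification over $S$) gives the next legal move. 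Induction on the turn number produces an infinite play.

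Second, I would verify the property for $G_{n,p}$ when $p = C/n$ with $C = C(\alpha)$ taken sufficiently large. Here the right choice of $W$ matters: a naive $W = V$ is doomed, since pendant vertices, isolated edges and short trees in sparse random graphs systematically generate failures. Instead I would set $W$ to be a suitably robust "core" — e.g.\ the subset of vertices lying in a high-density subgraph obtained by iteratively pruning vertices that have too few neighbors in the remainder. With such $W$, verification reduces to a union bound over pairs $(v, S)$ using Chernoff on the Binomial$((1-\alpha)n, C/n)$ degree of $v$ in $V \setminus S$ and on the further residual degrees of its neighbors.

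Third, the directed variant follows by replacing neighborhoods with out-neighborhoods throughout. Since out-degrees in directed $G_{n,p}$ are independent Binomials with the same parameters, both the combinatorial lemma and the probabilistic estimates carry over verbatim.

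The main obstacle, as I see it, is closing the union bound in Step 2. The combinatorial factor $\binom{n}{\alpha n} \sim \exp(\alpha n \log(e/\alpha))$ is exponential in $n$, so the per-pair failure probability must decay as $\exp(-\Omega(n))$ for a direct union bound to work. A purely \emph{local} expansion condition (depending only on $v$ and its two-step neighborhood) fails in only $\exp(-\Theta(C))$ probability per pair, which is far too weak. Bridging this gap requires either (a) replacing the local look-ahead by a more global expansion-style condition — e.g.\ "for every $S$ of size $\le \alpha n$, the graph $G[W] - S$ contains a subgraph of minimum degree $\ge 2$ of linear size" — whose failure probability can indeed be made $\exp(-\Omega(n))$ by standard edge-expansion estimates, or (b) a more subtle argument that does not quantify over all $\alpha n$-subsets $S$ but rather restricts attention to the cop sets reachable along actual plays. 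Navigating this tension and calibrating $C = C(\alpha)$ accordingly is where the bulk of the technical work will lie.
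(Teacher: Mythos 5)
Your Step 1 condition is the weak point, and it fails for a more basic reason than the union-bound tension you identify in your final paragraph: as stated, the ``one-step look-ahead'' property is quantified over \emph{all} sets $S$ with $|S|\le\alpha n$ and $v\notin S$, and hence cannot hold in any graph whose degrees are $o(n)$ — for a vertex $v\in W$ of degree less than $\alpha n$ the adversary may simply take $S\supseteq N(v)$, leaving no admissible neighbor $u$ at all. Since $G_{n,C/n}$ has maximum degree $O(\log n)$ w.h.p., no choice of the core $W$ can rescue the condition, so the greedy strategy of Step 1 never gets off the ground; the problem is not that the per-pair failure probability is too large, but that the deterministic property is unsatisfiable. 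Your proposed repair (a) — ``for every $S$ of size $\le\alpha n$, the graph $G[W]-S$ contains a linear-size subgraph of minimum degree $2$'' — is also not sufficient as it stands: the promised subgraph after the cops move may lie in a different connected component of the cop-free graph than the robber, and he cannot teleport to it, so the existence statement does not translate into a legal sequence of moves.

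The missing ingredient, which is exactly what the paper's proof supplies, is a \emph{uniqueness/overlap} mechanism that lets the robber track the good region as the cop set changes by one vertex per turn. The paper shows (Lemma~\ref{l:rg}) that for $C$ large, w.h.p.\ every set of $(1-\alpha)n$ vertices induces a subgraph with a connected component of size at least $\tfrac23(1-\alpha)n$, i.e.\ more than half of the set. The robber maintains a cop-free set $A$ of size $(1-\alpha)n$ and stays in its unique large component; when a cop lands on him, $A$ is modified in a single vertex, the new large component must overlap the old one by a linear number of vertices, and since the old component was connected through the robber's position he can step towards and into the new one. Probabilistically this is proved not vertex-by-vertex but by observing that the absence of a large component forces a balanced edge-free cut of the induced subgraph, which misses $\Omega(n^2)$ potential edges and hence has probability $e^{-\Omega(Cn)}$; this beats the $2^{O(n)}$ entropy of choosing the set and the cut once $C=C(\alpha)$ is large, which is precisely the $e^{-\Omega(n)}$ per-configuration decay you correctly identified as necessary. (For the directed case the cut is extracted from the acyclic condensation of the strongly connected components, taking sink components greedily, rather than by swapping out-neighborhoods for neighborhoods.) So your instinct in alternative (a) points in the right direction, but the condition must be strengthened from ``some min-degree-$2$ subgraph exists'' to ``a connected component covering more than half of every large induced subgraph exists,'' which is what makes the hand-off between consecutive time steps work.
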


Note that that $G_{n,p}$ with $p= C/n$ has constant average degree.
On the other hand, it is known that the maximum degree is $\log(n)$.
In the following, we construct graphs that need linearly many cops and have maximum degree $3$. 
The idea is that we define $G$ as the union of three random perfect matchings. Extending the proof ideas from Theorem~\ref{thm:Gnp}, we obtain the following results. 

\begin{theorem}\label{thm:random_matchings}
There exists an $\alpha >0$ such that with high probability $\alpha n$ cops do not suffice to catch the robber on the graph $G=M_1 \cup M_2 \cup M_3$, where $M_1, M_2, M_3$ are independent uniformly chosen random perfect matchings.
\end{theorem}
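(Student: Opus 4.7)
The plan is to extend the template of Theorem~\ref{thm:Gnp}: first establish a structural ``robustness'' property of $G = M_1 \cup M_2 \cup M_3$ that holds with high probability, then exhibit a robber strategy on any graph enjoying this property. Concretely, I aim to prove that there exist constants $\alpha, \beta > 0$ such that with high probability, for every vertex set $F \subseteq V(G)$ with $|F| \leq \alpha n$, the induced subgraph $G[V \setminus F]$ contains a subgraph of minimum degree at least $2$ on at least $\beta n$ vertices -- equivalently, its $2$-core has size at least $\beta n$.

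To prove the structural claim, I would work in the permutation model, where each $M_i$ is a uniformly random fixed-point-free involution on $[n]$. Given a candidate pair $(F,S)$ with $S \subseteq V \setminus F$ representing the set peeled away by the iterated $2$-core algorithm on $G \setminus F$, I would bound the probability that the three matchings produce exactly this pattern. Each vertex of $S$ is constrained to have at most one matching-partner outside $F \cup S$ at the moment of its peeling, and independence across the three matchings yields a probability decaying exponentially in $|S|$. A union bound over all pairs $(F,S)$ with $|V \setminus (F \cup S)| < \beta n$, combined with the entropy estimates $\binom{n}{k} \leq 2^{n H(k/n)}$, closes the argument for $\alpha$ small enough and $\beta$ close enough to $1$.

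Given this structural property, the robber plays so as to maintain the invariant that his position $v_t$ lies in the $2$-core $K_t$ of $G \setminus C_t$. Every vertex of $K_t$ has at least two $K_t$-neighbors, so $v_t$ has two candidate moves. When the cops respond by moving a cop to $v_t$, the $2$-core can shrink via a cascade of forced peelings in $K_t \setminus \{v_t\}$; however, a cascade triggered by a single vertex addition can only propagate through vertices with exactly two $K_t$-neighbors, and a local strengthening of the structural claim rules out that such cascades simultaneously consume both of $v_t$'s $K_t$-neighbors. Hence at least one safe move in $K_{t+1}$ always exists, and the robber can continue indefinitely.

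The principal obstacle is exactly this dynamic cascade control: the static bound on the $2$-core's size does not by itself prevent a long thin ``tendril'' of degree-$2$ vertices hanging off the core, whose removal by a single extra cop could sever all of the robber's escape routes. Establishing the quantitative stability needed -- ruling out such tendrils by a refined first-moment calculation that enumerates not just the peeled set $S$ but also a witnessing tree structure inside the $2$-core -- is where adapting the arguments from the $G_{n,p}$ setting to the much sparser $3$-regular union-of-matchings setting becomes technically most involved.
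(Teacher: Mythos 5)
There is a genuine gap, and it is exactly the one you flag in your last paragraph: the dynamic maintainability of the ``stay in the 2-core'' invariant is never established, and nothing in your structural claim substitutes for it. A lower bound on the \emph{size} of the 2-core of $G\setminus F$ (even uniformly over all $F$ with $|F|\le\alpha n$) does not prevent the configuration in which the robber sits at $v$ while both of his core-neighbours have core-degree exactly $2$; when a cop lands on $v$, the peeling cascade then deletes both candidate moves, or strands the robber in a tendril hanging off the core, where two or three cops finish him. The ``local strengthening'' you invoke is an adversarial, history-dependent statement (it must hold at every step of every play, where the occupied set is chosen adaptively as the set of previously visited robber positions), and it is not delivered by a static first-moment bound over pairs $(F,S)$; you would also need to handle the case that the robber's component of the 2-core is itself a bare cycle, on which three cops win. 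So the strategic half of your argument is missing its key lemma, and it is not clear it can be repaired in this form.

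For contrast, the paper avoids the cascade problem entirely by choosing an invariant with a built-in self-repair property. It calls $G$ $\alpha$-robust if every induced subgraph on at least $(1-\alpha)n$ vertices has a connected component on \emph{more than half} of its vertices, and shows (Lemma~\ref{l:alpha:robust}) that on an $\alpha$-robust graph the robber maintains membership in the unique majority component of $G[A]$ for a cop-free set $A$: when a cop lands, one vertex of $A$ is swapped out, and the new majority component must intersect the old one by a counting (plus parity) argument, which hands the robber a safe neighbour --- no peeling analysis is needed. The probabilistic input is then a clean union bound: by Lemma~\ref{lemma:3part} a failure of robustness yields a partition $B_1,B_2,B_3,C$ with no crossing edges and $|B_i|\le|B|/2$, the probability that one random matching avoids all crossing pairs is about $2^{-H(\beta_1,\beta_2,\beta_3)|B|/2}$ by counting perfect matchings, and since $H\ge 1$ on the constrained simplex, three independent matchings beat the $2^{H|B|}$ choices of the partition. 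If you want to salvage your route, you would need a persistence statement for the 2-core of comparable strength to this majority-overlap argument, which is precisely the part you have not supplied.
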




\begin{corollary}\label{cor:3-regular}
There exists an $\alpha >0$ and an $n_0\in\N$ such that for every even $n\ge n_0$ there exists a $3$-regular graph on $n$ vertices for which $\alpha n$ cops are not enough to catch the robber. 
\end{corollary}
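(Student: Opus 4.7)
The plan is to deduce Corollary~\ref{cor:3-regular} directly from Theorem~\ref{thm:random_matchings} by showing that the random graph $G = M_1 \cup M_2 \cup M_3$ is itself $3$-regular with probability bounded away from $0$. Observe first that $G$ fails to be $3$-regular if and only if at least two of the matchings $M_1, M_2, M_3$ share an edge: each $M_i$ contributes one edge at every vertex, so a vertex has degree less than $3$ in $G$ precisely when two of its three incident matching-edges coincide.

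The main quantitative step is to estimate the probability $P(\mathcal{B}_n)$ that the three matchings are pairwise edge-disjoint. For $1 \leq i < j \leq 3$ set $X_{ij} := |M_i \cap M_j|$. Since for any fixed edge $e$ of $K_n$ we have $P(e \in M_i) = 1/(n-1)$, a direct computation gives
\[
E[X_{ij}] \;=\; \binom{n}{2} \cdot \frac{1}{(n-1)^2} \;\longrightarrow\; \tfrac{1}{2}.
\]
A standard moment computation (or a Chen--Stein argument) shows that $(X_{12}, X_{13}, X_{23})$ converges in distribution to three independent Poisson variables of mean $1/2$. Equivalently, one may condition sequentially: $P(X_{12}=0)\to e^{-1/2}$, and conditioned on this $M_1 \cup M_2$ is a $2$-regular graph with $n$ edges, so $P(X_{13}=X_{23}=0 \mid X_{12}=0) \to e^{-1}$. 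In either case one concludes $P(\mathcal{B}_n) \to e^{-3/2} > 0$, and in particular $P(\mathcal{B}_n) \geq c$ for some absolute constant $c>0$ and all sufficiently large even $n$.

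Let $\mathcal{A}_n$ denote the event that $\alpha n$ cops do not suffice on $G$, with $\alpha > 0$ as in Theorem~\ref{thm:random_matchings}. That theorem gives $P(\mathcal{A}_n) \to 1$, so for every sufficiently large even $n$ we have $P(\mathcal{A}_n \cap \mathcal{B}_n) \geq c/2 > 0$. Consequently, at least one realization is simultaneously a simple $3$-regular graph on $n$ vertices and defeats $\alpha n$ cops, which is exactly the statement of the corollary.

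The only genuine technical ingredient is the Poisson estimate in the second paragraph; this is a routine piece of random graph theory (it is, in essence, the fact underlying the contiguity between the union-of-random-matchings model and the uniform random cubic graph model), so I do not anticipate any substantial obstacle beyond carefully writing down the relevant moment bounds.
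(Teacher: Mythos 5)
Your proposal is correct, and it takes a route that differs in an interesting way from the paper's self-contained argument. The paper proves the corollary twice: once by observing that the union-of-matchings model is essentially the configuration model (citing Bollob\'as~\cite{bollobas1980probabilistic}), and once by a direct computation that only establishes pairwise disjointness of $M_1,M_2,M_3$ with probability $\Omega(1/n)$ (via Markov's inequality applied to the number of shared edges); since $\Omega(1/n)$ is not bounded away from zero, the paper must pair this with the strengthened Theorem~\ref{thm:random_matchings_exp_high_prob}, which gives failure probability $e^{-\Omega(n)}$ rather than merely $o(1)$. You instead push the probabilistic estimate further, claiming the Poisson limit $\mathbb{P}(\mathcal{B}_n)\to e^{-3/2}$, which makes the simplicity probability bounded below by a constant; this buys you the luxury of combining it with the plain with-high-probability statement of Theorem~\ref{thm:random_matchings}, with no need for the exponential strengthening. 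The trade-off is that the Poisson approximation is exactly the step you leave as ``routine'': it is indeed standard (a factorial-moment/method-of-moments computation, or a citation to the configuration-model literature, settles it, and your conditional shortcut $\mathbb{P}(X_{13}=X_{23}=0\mid X_{12}=0)\to e^{-1}$ needs the estimate to hold uniformly over the cycle structure of $M_1\cup M_2$, which it does), but as written it is asserted rather than proved, so your proof is only complete modulo that standard fact --- which is essentially the same external input as the paper's first, citation-based proof. Everything else in your write-up (the degree characterization of when $G$ fails to be $3$-regular, and the intersection bound $\mathbb{P}(\mathcal{A}_n\cap\mathcal{B}_n)\ge c/2$) is sound.
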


This result may look surprising at first sight. In particular, consider the $3$-connected $3$-regular graph $DG_n$ obtained by taking a Hamilton cycle of length $2n$ and connecting every vertex to its antipode by an edge.



\begin{proposition}\label{prop:DGn}
For the graph $DG_n$ 
six cops suffice to catch the robber.
\end{proposition}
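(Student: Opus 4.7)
The plan is to exhibit an explicit six-cop strategy by exploiting the prism structure of $DG_n$: two rings of length $n$, call them $T = \{0, \ldots, n-1\}$ (the inner cycle) and $B = \{n, \ldots, 2n-1\}$ (the outer cycle), joined by the antipodal (``rung'') edges $\{i, i+n\}$. The central structural observation is the notion of a \emph{rung wall}: two cops occupying both endpoints of some rung $\{c, c+n\}$ completely block passage through that rung and confine the robber to one of the two ladder-shaped arcs of $DG_n$ cut by the wall.

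I would proceed in two phases. In Phase~1, three of the six cops play the classical three-cop cycle strategy on the robber's current ring, say $T$: one cop sits as a stationary wall at the robber's starting vertex $v_0$, and two others leapfrog along $T$ to push the robber toward the wall. On a pure cycle $C_n$ this suffices to catch, but on $DG_n$ the robber has the extra option of dropping down a rung to $B$ at each step. At whatever point he does so---either voluntarily early, or perforce at the cornering moment when his ring-neighbors are both cops and the rung is his only non-cop neighbor---I enter Phase~2: a fourth cop is flown to the $B$-endpoint of that rung, completing a rung wall.

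In Phase~2, the robber is confined to a ladder (path-prism) of at most $n-1$ columns. Using the two fresh cops, together with the Phase~1 cops repurposed by flying them from their old $T$-positions to the robber's successive positions on the ladder, one applies a standard leapfrog-and-push strategy. The ladder is strictly simpler than the full prism, with two natural degree-$2$ ``corner'' vertices adjacent to the rung wall, and the leapfrog strategy pushes the robber into those corners.

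The main obstacle is the bookkeeping in Phase~2: one must verify that, regardless of when during Phase~1 the robber first uses a rung and regardless of his subsequent choices on the ladder, the cops available suffice to corner him. A careful case analysis, distinguishing early versus late rung uses and tracking cop positions throughout (in particular checking that the cops contributing to the rung wall are never among those we need to reposition), is required to close this gap.
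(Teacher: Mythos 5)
Your proposal has a genuine gap, and it sits exactly where you flag it: Phase~2. The phrase ``standard leapfrog-and-push strategy'' that ``pushes the robber into those corners'' is not something the entanglement rules allow you to implement: cops may only fly onto the robber's \emph{current} vertex, so they can never be positioned ahead of him to push him anywhere. Concretely, once the robber is confined to the ladder, a single chasing pair (alternately dropping on his position) only forbids backtracking, and the robber can evade such a pair forever by circling any cop-free $4$-cycle of the ladder. Progress therefore has to come from erecting further walls: each time the robber traverses a diagonal, the currently chasing pair ends up on its two endpoints and must \emph{stay} there, which pins that pair. The heart of the proof --- missing from your sketch --- is the bookkeeping invariant that at any moment at most two walls are needed to bound the robber's current region: when a chasing pair walls a newly used diagonal, the robber's region lies between that wall and exactly one of the two previously blocked diagonals, so the other pinned pair is released and can resume the chase; moreover the robber's region shrinks by at least two vertices with every diagonal he uses, so the process terminates with him caught. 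This rotation of three pairs (your budget --- the $v_0$ cop, the two chasers, and cops $4$--$6$ --- regroups into exactly such pairs after the first wall) is what makes six cops suffice; without stating and proving this invariant and the shrinking argument, the strategy is not established, since the robber will otherwise simply park himself between two walls and outlast your remaining mobile cops.

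Two smaller inaccuracies are worth fixing, though they do not doom the approach. First, $DG_n$ is not the prism you describe (two $n$-cycles joined by rungs $\{i,i+n\}$); it is a single Hamilton cycle of length $2n$ with antipodal chords, i.e.\ a M\"obius ladder. Second, a single ``rung wall'' does \emph{not} cut the graph into two arcs: deleting the two endpoints of one diagonal leaves one connected ladder; it takes \emph{two} blocked diagonals to split $DG_n$ into two components (this separation property, which holds for both the M\"obius ladder and the prism, is precisely what drives the shrinking-region argument above).
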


 The fact that all diagonals go to a vertex that is ``furthest away'' may seem to make catching the robber quite difficult for the cops. However, as it turns out, the symmetry of the construction is the reason for the small entanglement number. If we replace the matching of diagonals by a {\em random} matching then the entanglement number is typically large again.

\begin{corollary}\label{cor:cycle_matching}
Consider the graph $G= H \cup M$ where $H$ is a Hamiltonian cycle and $M$ is a random perfect matching. There exists an $\alpha >0$ such that with high probability $\alpha n $ cops do not suffice to catch the robber on $G$. 
\end{corollary}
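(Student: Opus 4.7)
Since $n$ must be even for a perfect matching $M$ to exist, we can decompose the Hamilton cycle $H$ into two alternating perfect matchings $M_1, M_2$ and write $G = M_1 \cup M_2 \cup M$. This places us formally in the setting of Theorem~\ref{thm:random_matchings}, except that two of the three matchings are deterministic while only $M$ is random. The plan is to revisit the proof of Theorem~\ref{thm:random_matchings} and verify that the randomness of the single matching $M$ is already enough to carry it through, possibly at the cost of a smaller value of $\alpha$.

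The randomness in Theorem~\ref{thm:random_matchings} should enter via a structural property of the following flavour: with high probability, for every cop set $S \subseteq V$ of size at most $\alpha n$, only few vertices in $V \setminus S$ fail to have at least two neighbors in $V \setminus S$. In our setting a vertex $v \notin S$ already has two deterministic $H$-neighbors, so $v$ violates this property only if either (a) both $H$-neighbors of $v$ lie in $S$, or (b) exactly one $H$-neighbor of $v$ and the $M$-partner of $v$ both lie in $S$. Case (a) contributes at most $|S|$ vertices deterministically. For case (b), for each candidate $v$ the probability over the random matching $M$ that its partner lies in $S$ is $|S|/(n-1)$, so a Chernoff/negative-correlation bound on the number of matching edges with one endpoint in $S$ and the other in the neighbourhood-of-$S$ candidates, combined with a union bound over the $\binom{n}{\alpha n}$ choices of $S$, should yield the uniform expansion statement for $\alpha$ small enough.

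Given such an expansion property, the robber's strategy from the proof of Theorem~\ref{thm:random_matchings} applies essentially verbatim: at every turn the robber is on a vertex with at least two free neighbors, so even after one cop flies onto her position she still has an escape. The main obstacle I anticipate is that the original proof may rely on a more elaborate probabilistic argument — for instance, iterated expansion of neighborhoods of larger radius, or joint independence of all three matchings — which is harder to reproduce when two of the matchings are fixed. In that case one would need to isolate the probabilistic events actually used and re-establish each of them using only the randomness of $M$, leveraging the benign structure of $M_1 \cup M_2$ (a single Hamilton cycle, hence no dense subgraphs of its own). The final constant $\alpha$ in the corollary will in general be smaller than the one obtained in Theorem~\ref{thm:random_matchings}.
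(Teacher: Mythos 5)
There is a genuine gap, and it lies in the step you treat as routine. The robber's strategy in the proof of Theorem~\ref{thm:random_matchings} does \emph{not} rest on a local property of the form ``every cop-free vertex has at least two cop-free neighbors,'' and such a property is in fact not sufficient for the robber to survive: the graph $DG_n$ of Proposition~\ref{prop:DGn} is $3$-regular, so against $\alpha n \ll n$ cops the robber locally almost always has two free neighbors, and yet six cops catch him. With many cops the danger is not running out of neighbors at a single step but being herded into an ever-shrinking region. What the proof of Theorem~\ref{thm:random_matchings} actually uses is the global notion of $\alpha$-robustness (every induced subgraph on $(1-\alpha)n$ vertices has a component on more than half its vertices), combined with Lemma~\ref{l:alpha:robust}, which lets the robber always stay inside a giant component of a cop-free set. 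Your Chernoff-plus-union-bound sketch establishes (at best) the local degree property, so the conclusion ``the robber's strategy applies essentially verbatim'' does not follow.

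Moreover, even if you aim at the correct robustness property, re-running the union bound of Theorem~\ref{thm:random_matchings} with only one random matching does not work as stated: there the count of partitions, $2^{H(\beta_1,\beta_2,\beta_3)\abs{B}}$, is beaten only because three independent matchings contribute $2^{-3H(\beta_1,\beta_2,\beta_3)\abs{B}/2}$; a single matching contributes only $2^{-H(\beta_1,\beta_2,\beta_3)\abs{B}/2}$, and the exponents have the wrong sign. A direct argument along your lines would have to exploit that the deterministic Hamilton cycle forces each $B_i$ to be a union of arcs, drastically reducing the number of admissible partitions --- a real piece of work you have not done. The paper avoids all of this with a conditioning trick: $M_1\cup M_2$ is a Hamiltonian cycle with probability at least $\tfrac{1}{n-1}$ (only polynomially small), while the failure probability in Theorem~\ref{thm:random_matchings_exp_high_prob} is $e^{-\Omega(n)}$, so even after conditioning on $M_1\cup M_2$ being Hamiltonian --- which by symmetry yields exactly the distribution ``fixed Hamilton cycle plus independent uniform matching'' --- the failure probability still tends to $0$.
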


We complement these lower bounds by the following upper bounds.

\begin{theorem}\label{thm:upper_bound}
For any $3$-regular graph on $n$ vertices, $\lfloor\frac{n}{4}\rfloor+4$ 
cops suffice.
\end{theorem}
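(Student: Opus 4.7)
I would try to prove the bound by exhibiting an explicit strategy that amortizes a small number of ``permanent'' cops against the robber's moves. The plan is to maintain a rolling tail of four cops on the robber's most recent positions — preventing immediate backtracking and forcing him to constantly advance into new territory — and to deposit one additional permanent cop roughly every four turns. The four tail cops account for the $+4$ term, while the permanent blockers account for the $\lfloor n/4 \rfloor$ term.

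Concretely, my scheme is the following. On every turn one cop flies to the robber's current position. The first four turns spend four fresh cops, forming the initial tail. From then on, on three out of every four turns we simply recycle the oldest tail cop by flying it forward to the robber's new position, so that no new cop is spent; on the fourth turn, instead, we leave the oldest tail cop in place as a permanent blocker and introduce a fresh cop to extend the tail. Under this schedule, after $t$ turns we have used at most $4 + \lfloor t/4 \rfloor$ cops, so if we can show the game ends within about $n$ turns, the total cop count is at most $\lfloor n/4 \rfloor + 4$, as required.

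The main obstacle is proving termination and correctness of the schedule. The delicate point is that whenever a tail cop is recycled, the vertex it leaves becomes uncovered, and in a 3-regular graph the robber may in principle revisit it through his two remaining neighbor-edges. To rule out indefinite evasion I would introduce a potential function, for instance the size of the connected component of the robber's current vertex in the subgraph of non-cop vertices, and argue that each permanent blocker placement strictly decreases it. Since the initial potential is at most $n$ and the blocker placements happen every four turns, this should force the game to end within the claimed number of turns. Making the potential strictly decrease will likely require adjusting the exact schedule of promotions — for example, by promoting the oldest tail cop precisely when doing so would otherwise let the robber re-enter a vacated region, rather than on a rigid ``every fourth turn'' basis — and then a careful case analysis of the robber's options at each cop placement will be needed to complete the argument.
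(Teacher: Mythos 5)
There is a genuine gap in the accounting, and it sits exactly where your sketch defers the work. Your bound on the number of cops is (number of turns)/4 plus the tail of four, so you need the game to end within roughly $n$ turns. But nothing in your scheme forces that: the robber is only blocked from his last four positions, so in a $3$-regular graph he can revisit older vertices again and again, and the game length is not a priori bounded by $n$. Your proposed potential (size of the robber's component in the cop-free subgraph) starts at $\le n$ and, as you argue, drops by at least $1$ per permanent placement; that only caps the number of permanent placements at about $n$, i.e.\ the total cop count at about $n+4$, which is trivial. To make your amortization close you would need the potential to drop by at least $4$ per placement (or, equivalently, to show the robber effectively never revisits vertices), and neither is established. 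The alternative you mention --- promoting a tail cop only when needed to seal a vacated region --- destroys the ``one permanent cop per four turns'' bookkeeping, so the $\lfloor n/4\rfloor$ count is again unsupported. Note also that placing permanent cops on every fourth \emph{newly visited} vertex would cap their number automatically, but then the hard part becomes proving the robber cannot survive forever on already-visited vertices; an argument of that flavour tends to give a weaker constant (roughly $n/3$), not $n/4$.

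The paper avoids this online amortization entirely: it chooses the set of permanent-cop positions \emph{offline}. One repeatedly deletes a vertex that still has three distinct neighbours and then reduces the graph (delete degree-$1$ vertices, contract degree-$2$ vertices); by $3$-regularity each deletion-plus-reduction destroys four vertices of degree $3$, so at most $\lfloor n/4\rfloor$ vertices are ever deleted, and the count of permanent cops is automatic because they are only ever parked on this pre-chosen set $A$ when the robber enters it. The residual graph $G\setminus A$ reduces to very simple multigraphs (cycles with alternate doubled edges, paths with loops at the ends, triple edges), on which four cops suffice, giving the $+4$. If you want to salvage your approach, you would need a structural reason, not a per-turn schedule, why at most $\lfloor n/4\rfloor$ vertices ever need to be blocked permanently; that structural reason is exactly what the reduction argument supplies.
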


We now turn to directed graphs. Theorem \ref{thm:random_matchings} immediately implies that there are graphs that are the union of six perfect matchings on which a linear number of cops is needed: simply direct each edge in both directions.
However, there also exist directed  graphs with maximum out-degree two  and (total) maximum degree three can be very hard for the cops.

\begin{corollary}\label{cor:directed_blow_up}
 There exists an $\alpha >0$ and an $n_0\in\N$ such that for every even $n\ge n_0$  there is a directed $3$-regular (that is, the sum of in and out degree of every vertex is $3$) graph $G$ on $6n$ vertices, such that $\alpha n$ cops do not suffice to catch the robber on $G$. 
\end{corollary}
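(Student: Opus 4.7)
The plan is to derive the corollary from Corollary~\ref{cor:3-regular} via a directed ``blow-up'' construction. Let $G_0$ be a $3$-regular undirected graph on $n$ vertices produced by Corollary~\ref{cor:3-regular}; by the construction used in Theorem~\ref{thm:random_matchings} it comes with a distinguished decomposition $E(G_0)=M_1\cup M_2\cup M_3$ into three perfect matchings. For each $v\in V(G_0)$ I introduce six new vertices $v^i_{\text{in}},v^i_{\text{out}}$, $i\in\{1,2,3\}$, forming a \emph{gadget} $C_v$, connected by the directed $6$-cycle
\[
v^1_{\text{in}}\to v^1_{\text{out}}\to v^2_{\text{in}}\to v^2_{\text{out}}\to v^3_{\text{in}}\to v^3_{\text{out}}\to v^1_{\text{in}}.
\]
For every edge $\{u,v\}\in M_i$ I add the two directed edges $v^i_{\text{out}}\to u^i_{\text{in}}$ and $u^i_{\text{out}}\to v^i_{\text{in}}$. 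The resulting directed graph $D$ has $6n$ vertices; a direct count shows $\deg^+(x)+\deg^-(x)=3$ at every $x\in V(D)$ and $\max_x\deg^+(x)=2$.

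The heart of the argument is a simulation lemma: a winning robber strategy on $G_0$ against $k$ cops lifts to a winning robber strategy on $D$ against $\lfloor k/c\rfloor$ cops for some universal constant $c\geq 1$; combined with Corollary~\ref{cor:3-regular} this yields $\text{ent}(D)\geq \alpha n/c=\Omega(n)$. The $D$-robber maintains a virtual $G_0$-play: his current gadget $C_v$ marks the virtual $G_0$-robber at $v$, and a virtual $G_0$-cop is declared at $v$ whenever some $D$-cop sits inside $C_v$, so the virtual cop count never exceeds the $D$-cop count. When the $D$-robber enters $C_v$ at $v^i_{\text{in}}$ (having just crossed the matching edge from $u^i_{\text{out}}$), he consults the $G_0$-strategy for the next neighbour $w$ via some $M_j$ and realises this in $D$ by walking along the $6$-cycle to $v^j_{\text{out}}$ and then crossing $v^j_{\text{out}}\to w^j_{\text{in}}$.

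The main obstacle is to handle the case when $D$-cops inside $C_v$ block this walk: a single cop at $v^i_{\text{out}}$ already blocks every move from $v^i_{\text{in}}$, since that is the unique out-neighbour of $v^i_{\text{in}}$. To control this I will use two observations. First, every $D$-cop inside $C_v$ was placed during a previous visit of the $D$-robber to that specific gadget vertex, so at most $|C_v|=6$ cops can ever accumulate in a single gadget. Second, whenever the preferred $M_j$-exit is blocked, the backtracking exit along $M_i$ remains open unless the blocker sits exactly at $v^i_{\text{out}}$; in the latter case $C_v$ already witnesses a virtual $G_0$-cop. Working with a positional winning strategy for the $G_0$-robber (which exists by positional determinacy of this finite game), any such fallback constitutes an alternative valid $G_0$-move, and the virtual game remains winnable as long as the virtual cop count stays below $\alpha n$. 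Making this bookkeeping precise, and pinning down the constant $c$, is the technical part that I expect to be the main hurdle.
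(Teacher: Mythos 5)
Your construction is the same as the paper's up to relabeling (a directed $6$-cycle per vertex of $G_0$, with the two directed matching edges for each $M_i$-edge joining the corresponding gadgets), and the intended simulation — occupied gadgets play the role of occupied vertices of $G_0$ — is also the paper's. However, the proposal stops exactly where the proof has to happen: you declare the simulation lemma, flag the in-gadget blocking problem, and defer "making this bookkeeping precise" and "pinning down $c$". The missing idea that closes this gap is an invariant you never state: the virtual $G_0$-robber, playing a winning strategy, only ever moves to cop-free vertices, so the $D$-robber only ever \emph{enters} gadgets containing no cop at all; and since in the entanglement game a cop may only be flown onto the robber's \emph{current} position, any cop that appears inside his current gadget sits on a vertex he has already left during this visit, i.e.\ strictly behind him on the directed $6$-cycle. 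As the walk from the entry vertex to any exit vertex $v^j_{\text{out}}$ is a forward path of at most five cycle edges, and the crossing edge leads into a gadget that is cop-free by the choice of the virtual move, no exit is ever blocked. This makes the simulation lossless: $k$ cops on $D$ induce at most $k$ occupied gadgets, so $c=1$ and the corollary follows directly from Theorem~\ref{thm:random_matchings} with the same $\alpha$ (rescaled only by the factor $6$ in the vertex count).

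Without that invariant, your proposed patch does not work as stated. The fallback "whenever the preferred $M_j$-exit is blocked, take the backtracking exit along $M_i$ instead" abandons the prescribed winning strategy, and invoking positional determinacy does not rescue this: a positional winning strategy only guarantees a win if you play \emph{its} moves from positions in the robber's winning region, and an arbitrary alternative move (to the $M_i$-neighbour, which may even carry a virtual cop, or may simply be a losing move) can leave that region. Likewise the bound "at most $6$ cops per gadget" is true but does not by itself prevent a block, and the unspecified constant $c$ is a symptom of the same missing argument. Once you add the cop-free-entry invariant and the "cops only land behind the robber on the directed cycle" observation, all of this machinery becomes unnecessary and the proof is complete, essentially as in the paper.
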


The idea here is that we ``blow up'' an undirected $3$-regular graph to a directed one by replacing each vertex by a directed cycle of length six.
We note that, in contrast to the undirected version, we cannot just take a union of three random directed matchings. This follows from the fact that the largest strongly connected component in the union of three random directed matchings contains with high probability only sublinearly many vertices  as can be shown using some ideas from \cite{cooper2004size}. 

\section{Proofs} \label{sec:proof_main_thm}
We start the proof section by considering the entanglement game for Erd\H{o}s-R\'enyi random graphs $G_{n,p}$, cf. \cite{Bol,JLR} or \cite{FK} for an introduction to random graphs.  In Section~\ref{sec:proof_3_regular} we will  generalize this proof to obtain Theorem~\ref{thm:random_matchings}. In Section \ref{sec:proof_of_corollaries}, we use Theorem \ref{thm:random_matchings} to prove Corollaries~\ref{cor:3-regular}, \ref{cor:cycle_matching} and \ref{cor:directed_blow_up}. Finally, in Section \ref{sec:proofs_upper_bounds}, we will prove the stated upper bounds of Proposition \ref{prop:DGn} and Theorem \ref{thm:upper_bound}.


\subsection{Proof of Theorem~\ref{thm:Gnp}}\label{sec_proof_Gnp}
Recall from the introduction that adding edges to the graph can only make it harder for the cops. Without loss of generality it thus suffices to consider the case $G_{n,p}$, where $p=C/n$ for some (large) constant~$C>0$. A standard result from random graph theory is that such  a random graph has with high probability one large component  (of size approximately $\beta n$, where $\beta$ is a function of the constant~$C$ of the edge probability~$p$), while all additional components have at most logarithmic size. Here we need the following strengthening of this result:

\begin{lemma}\label{l:rg}
For every $0<\bar{\alpha} < 1$ there exists a constant $C=C(\bar{\alpha}) >0$ such that for any $p\ge C/n$ the random graph $G_{n,p}$ is with high probability such that every subset $X\subseteq V$ of size $\bar{\alpha} n$ induces a subgraph that
 contains a connected component of size at least $\frac23\bar{\alpha} n$.
\end{lemma}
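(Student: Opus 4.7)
The plan is a standard first-moment argument. Since the desired property is monotone (adding edges can only create larger components inside $G[X]$), we may assume $p = C/n$ exactly. The deterministic reduction at the heart of the proof is the following: if $X\subseteq V$ with $|X| = \bar{\alpha} n$ is such that $G[X]$ has no connected component of size at least $\frac{2}{3}\bar{\alpha} n$, then $X$ admits a partition into two disjoint sets $A, B$ with $|A|,|B|\in [\frac{1}{3}\bar{\alpha} n,\,\frac{2}{3}\bar{\alpha} n]$ and no edge of $G$ between $A$ and $B$. Indeed, if the largest component of $G[X]$ already has size in this window, take it as $A$; otherwise every component has size $< \frac{1}{3}\bar{\alpha} n$, and one greedily adds whole components to $A$ until $|A|\ge \frac{1}{3}\bar{\alpha} n$. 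Since the component that tipped the threshold was itself small, one still has $|A| < \frac{2}{3}\bar{\alpha} n$. In either case $B:=X\setminus A$ lies in the same window.

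Given this, the failure probability of the lemma is at most the probability that there exist disjoint $A,B\subseteq V$ with $|A|+|B|=\bar{\alpha} n$, both sizes in $[\frac{1}{3}\bar{\alpha} n,\frac{2}{3}\bar{\alpha} n]$, and no $A$--$B$ edge in $G_{n,p}$. A union bound gives
\[
\Pr[\text{fail}] \;\le\; \sum_{a}\binom{n}{a}\binom{n-a}{b}(1-p)^{ab},
\]
summed over $a\in[\frac{1}{3}\bar{\alpha} n,\frac{2}{3}\bar{\alpha} n]$ with $b=\bar{\alpha} n - a$. Using $\binom{n}{k}\le (en/k)^k$ together with $a,b\ge \bar{\alpha} n/3$, the binomial factor is bounded by $(3e/\bar{\alpha})^{\bar{\alpha} n}$; while $ab\ge \bar{\alpha}^2 n^2 / 9$ gives $(1-p)^{ab}\le \exp(-C\bar{\alpha}^2 n/9)$. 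Absorbing the at most $n$ choices of $a$ into the exponent leads to an overall bound of
\[
n\cdot\exp\!\bigl(\bar{\alpha} n\log(3e/\bar{\alpha}) - C\bar{\alpha}^2 n/9\bigr),
\]
which is $o(1)$ as soon as $C > 9\log(3e/\bar{\alpha})/\bar{\alpha}$; this defines the required $C=C(\bar{\alpha})$.

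The conceptual core of the argument is the partitioning step: it converts the non-local hypothesis ``no large component in $G[X]$'' into the local cut event ``no $A$--$B$ edges'', which is exactly what a union bound can digest. Once this reduction is in place, the remaining calculation is the usual entropy-versus-edge-cost tradeoff, and the only obstacle is bookkeeping around the window $[\bar{\alpha} n/3,2\bar{\alpha} n/3]$ and the integrality of $a,b$, both of which are easily absorbed into constants. Note that the factor $\frac{2}{3}$ in the lemma is essentially arbitrary: any constant strictly less than $\bar{\alpha}$ would work with the same scheme, at the price of a larger $C(\bar{\alpha})$.
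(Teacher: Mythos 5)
Your proof is correct and follows essentially the same route as the paper's: both convert ``no component of size $\frac23\bar{\alpha}n$ in $G[X]$'' into an empty cut with both sides of linear size via a greedy grouping of components, then apply a union bound trading the entropy of choosing the cut against $(1-p)^{ab}$. The only differences are cosmetic — you enumerate the pairs $(A,B)$ directly with a slightly tighter size window and use $\binom{n}{k}\le(en/k)^k$ where the paper uses $\binom{n}{\bar{\alpha}n}2^{\bar{\alpha}n}$ and the binary entropy bound.
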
 
\begin{proof}
First observe that any graph on $\bar{\alpha} n$ vertices that does not contain a component of size $\beta n$ contains a cut $(S,\bar{S})$ such that $E(S,\bar S)$ contains no edge and $|S|,|\bar{S}|\le \frac12(\bar{\alpha}+\beta)n$. This follows easily by greedily placing components into $S$ as long as the size constraint is not violated. Note that such a cut $(S,\bar{S})$ contains at least $\frac14(\bar{\alpha}^2-\beta^2)n^2$ possible edges. The probability that all these edges are missing in the random graph $G_{n,p}$ is thus at most $(1-p)^{\frac14(\bar{\alpha}^2-\beta^2)n^2}\le e^{-\frac{C}4(\bar{\alpha}^2-\beta^2)n}$. By a union bound over all sets $X$ of size $\bar{\alpha} n$ and all possible cuts $(S,\bar{S})$ we thus obtain that the probability that the random graph $G_{n,p}$ does not satisfy the desired property is at most
$$
 \binom{n}{\bar{\alpha} n}\cdot 2^{\bar{\alpha} n} \cdot e^{-\frac{C}4(\bar{\alpha}^2-\beta^2)n},
$$ 
which in turn can be bounded by
$$
2^{H(\bar{\alpha})n} \cdot 2^{\bar{\alpha} n} \cdot e^{-\frac{C}4(\bar{\alpha}^2-\beta^2)n}
$$
by using the standard estimations for the binomial coefficient, where $H(x)=-x \log_2 x - (1-x)\log_2(1-x)$.
By letting $\beta = \frac23\bar{\alpha}$ and making $C$ large enough we see that this term goes to zero, which concludes the proof of the lemma.\qed
\end{proof}

With Lemma~\ref{l:rg} at hand we can now conclude the proof of the theorem as follows. Pick any $0<\alpha<1$ and let $\bar{\alpha}=1-\alpha$. Assume that $G_{n,p}$ satisfies the property of Lemma~\ref{l:rg}. The robber can win against $\alpha n$ cops with the following strategy: he aims at  always staying in a component of size at least $\frac23\bar\alpha n$ in a subgraph $G[A]$ for some cop-free set $A$ of size $\bar\alpha n$. Clearly, this can be achieved at the start of the game. Now assume that the robber is in such a component $C$ of a subgraph $G[A]$. If a cop moves to the  location of the robber,
we change $A$ by removing this vertex and  add instead another vertex not covered by a cop. Call this new cop-free set $A'$. By assumption $G[A']$ contains a component $C'$ of size $\frac23\bar\alpha n$. Clearly, $C$ and $C'$ overlap and the robber can thus move to $C'$, as required.

The directed cases follows similarly. The only slightly more tricky case is the argument for the existence of the cut $(S,\bar{S})$. This can be done as follows. Consider all strongly connected components of $G_{n,p}$. It is then not true that there exist no edges between these components. What is true, however, is that the cluster graph (in which the components are replaced by vertices and the edges between components by one or two directed edges depending on which type of edges exist between the corresponding components)  is acyclic. If we thus repeatedly consider sink components, placing them into  $S$ as long as the size constraint is not violated, we obtain a cut $(S,\bar{S})$ which does not contain any edge directed from $S$ to $\bar{S}$. From here on the proof is completed as before. \qed

\subsection{Proof of Theorem~\ref{thm:random_matchings} }\label{sec:proof_3_regular}
\def\perm{\mbox{pm}}%

%
In this section we prove the a slightly stronger version of Theorem \ref{thm:random_matchings}, i.e., we show that the statement of Theorem \ref{thm:random_matchings} holds with exponentially high probability. We need this statement to proof the corollaries in Section \ref{sec:proof_of_corollaries}. For completeness we restate the theorem in this form:

\begin{theorem}\label{thm:random_matchings_exp_high_prob}
There exists an $\alpha >0$ such that, with probability $1-e^{-\Omega(n)}$, $\alpha n$ cops do not suffice to catch the robber on the graph $G=M_1 \cup M_2 \cup M_3$, where $M_1, M_2, M_3$ are independent uniformly chosen random perfect matchings.
\end{theorem}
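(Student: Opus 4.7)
The plan is to mirror the two-step strategy of Theorem~\ref{thm:Gnp}. First, I will establish an analogue of Lemma~\ref{l:rg} for $G = M_1 \cup M_2 \cup M_3$ but with failure probability $e^{-\Omega(n)}$: there should exist constants $\bar\alpha, \beta \in (0,1)$ with $\beta < \bar\alpha$ such that, with probability $1 - e^{-\Omega(n)}$, every subset $A \subseteq V$ with $|A| = \bar\alpha n$ induces a subgraph $G[A]$ containing a connected component of size at least $\beta n$. Second, setting $\alpha := 1 - \bar\alpha$, I apply the robber's strategy from Section~\ref{sec_proof_Gnp} verbatim: the robber maintains a cop-free set $A$ of size $\bar\alpha n$ and sits in a large component of $G[A]$, swapping the vertex of any newly arriving cop for another cop-free vertex.

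For the structural property, I use the same bad-cut reduction as in Lemma~\ref{l:rg}: if $G[A]$ has no component of size $\beta n$, then $A$ admits a partition $(S, \bar S)$ with $|S|, |\bar S| \in [\tfrac12(\bar\alpha - \beta)n, \tfrac12(\bar\alpha + \beta)n]$ and no edge of $G$ between $S$ and $\bar S$. It therefore suffices to bound, for each such disjoint pair $(S, \bar S) \subseteq V$, the probability that none of $M_1, M_2, M_3$ contributes a crossing edge. For a single uniformly random perfect matching $M$ on $V$, the number of perfect matchings of $V$ with no $S$–$\bar S$ edge can be written as an explicit sum over the numbers of within-$S$ edges, within-$\bar S$ edges, and edges from $S \cup \bar S$ to $V \setminus (S \cup \bar S)$; a Stirling estimate on the resulting ratio with $(n-1)!!$ shows that this probability is at most $e^{-c(\sigma,\tau)n}$ for a positive constant $c$, where $\sigma = |S|/n$ and $\tau = |\bar S|/n$. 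Since $M_1, M_2, M_3$ are independent, the probability that $G$ itself has no $S$–$\bar S$ edge is at most $e^{-3c(\sigma,\tau)n}$. A union bound over the at most $\binom{n}{\bar\alpha n}\cdot 2^{\bar\alpha n} \le 2^{(H(\bar\alpha)+\bar\alpha)n}$ choices of $(A, S, \bar S)$ then gives an overall failure probability of at most $2^{(H(\bar\alpha)+\bar\alpha)n}\cdot e^{-3 c_{\min} n}$, where $c_{\min}$ is the infimum of $c(\sigma,\tau)$ over the admissible range.

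The main obstacle, and the essential difference from the proof of Theorem~\ref{thm:Gnp}, is that we have no free density parameter to tune: the decay rate $c_{\min}$ is fixed once $\bar\alpha$ and $\beta$ are chosen. I therefore plan to take $\alpha$ small so that $\bar\alpha$ is close to $1$ and the combinatorial penalty $H(\bar\alpha) + \bar\alpha$ is close to $1$, and then to take $\beta$ small relative to $\bar\alpha$, which forces the admissible cuts to lie close to a balanced split of $A$. In this regime a direct Stirling calculation for the nearly-balanced cut gives $c_{\min}$ close to $\tfrac12 \ln 2$, so that $3 c_{\min}$ comfortably exceeds $(H(\bar\alpha) + \bar\alpha)\ln 2 \approx \ln 2$ and the union bound closes exponentially. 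Once the strengthened analogue of Lemma~\ref{l:rg} is in place, the robber strategy from Section~\ref{sec_proof_Gnp} transfers without change, yielding the claimed $1 - e^{-\Omega(n)}$ probability bound.
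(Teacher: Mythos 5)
There is a genuine gap, and it sits exactly at the point where the paper has to be ``more careful'' than in the $G_{n,p}$ case. Your robber strategy (maintain a cop-free set $A$ of size $\bar\alpha n$ and stay in a large component of $G[A]$, swapping one vertex when a cop lands) only works if consecutive large components are forced to intersect: when the cop lands on the robber's vertex $v$ and $A$ is updated to $A'$, the robber can continue only because the big component $C'$ of $G[A']$ must meet $C\setminus\{v\}$, and this needs $|C|,|C'|$ to exceed half of $|A|$ (the paper's $\frac23\bar\alpha n$ in Lemma~\ref{l:rg}, and the ``larger than $|X|/2$'' threshold, with a parity trick, in its $\alpha$-robustness lemma). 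You, however, explicitly plan to take the component guarantee $\beta n$ with $\beta$ \emph{small} relative to $\bar\alpha$ -- precisely so that the bad cuts are nearly balanced and your rate $c_{\min}\approx\frac12\ln 2$ beats the union bound. With $\beta\le\bar\alpha/2$ the components of $G[A]$ and $G[A']$ guaranteed by your structural lemma need not overlap at all, so the claim that ``the robber strategy from Section~\ref{sec_proof_Gnp} transfers without change'' is false; the structural statement you prove is too weak for the strategy you invoke.

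The framework is repairable, but not with your stated choice of parameters. You must take $\beta>\bar\alpha/2$, and then the admissible cuts are no longer nearly balanced: the greedy argument only gives parts in $\bigl[\tfrac12(\bar\alpha-\beta)n,\tfrac12(\bar\alpha+\beta)n\bigr]$, so $c_{\min}$ drops. Taking $\beta$ just above $\bar\alpha/2$ still works with your crude count of $2^{\bar\alpha n}$ cuts, since the worst split is roughly $(1/4,3/4)$ and $\tfrac32 H(1/4)\approx 1.22>1\approx H(\bar\alpha)+\bar\alpha$; but for, say, $\beta=\tfrac23\bar\alpha$ the worst split is $(1/6,5/6)$ and $\tfrac32H(1/6)\approx 0.97<1$, so the crude union bound no longer closes and you must, as the paper does, pair the number of cuts with given part sizes ($2^{H\cdot|B|}$) against the probability for those same sizes ($2^{-3H\cdot|B|/2}$) before taking the worst case. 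The paper avoids all of this tuning by working with the threshold ``component larger than $|X|/2$'' (which is exactly what the strategy needs, made rigorous via an evenness argument in Lemma~\ref{l:alpha:robust}), reducing its failure to a \emph{three}-part partition with all parts at most $|B|/2$ (Lemma~\ref{lemma:3part}), and using $\min H(\beta_1,\beta_2,\beta_3)=1$ over that region. Your treatment of the edges meeting the cop-occupied set $V\setminus(S\cup\bar S)$ (an explicit sum plus Stirling) is only sketched but is plausible; the paper handles it by conditioning on the partial matching touching that set and using uniform continuity of the entropy. So: fix the size threshold to exceed $|A|/2$, and redo the rate/union-bound bookkeeping for the resulting unbalanced cuts; as written, the proof does not go through.
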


Let $\alpha>0$ denote a sufficiently small constant to be chosen later, and, as before, let $\bar{\alpha}:=1-\alpha$. The main idea of the proof of Theorem~\ref{thm:random_matchings_exp_high_prob} is similar to the strategy that we used in the proof of Theorem~\ref{thm:Gnp}. Namely, we show that every subgraph induced by an $\bar{\alpha}$-fraction of the vertices contains a large connected component.  In the proof of Theorem~\ref{thm:Gnp} we used $\frac{2}{3}\bar{\alpha} n$ as a synonym for ``large''. As it turns out, in the proof of Theorem~\ref{thm:random_matchings_exp_high_prob} we have to be more careful.
To make this precise we start with a definition.
Given $\alpha>0$, we say that a graph $G=(V,E)$ is $\alpha$-robust, if for every set $X\subseteq V$ of size $|X| \ge \bar{\alpha} n=(1-\alpha)n$ the induced graph $G[X]$ contains a connected component that is larger than $|X| / 2$.

\begin{lemma}\label{l:alpha:robust}
Assume $G=(V,E)$ is an $\alpha$-robust graph for some $0<\alpha<1$. Then  $\alpha|V|-2$ cops are not sufficient to catch the robber.
\end{lemma}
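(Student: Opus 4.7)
The plan is to adapt the robber strategy from the proof of Theorem~\ref{thm:Gnp}. Let $\bar\alpha := 1-\alpha$ and $k := \alpha|V|-2$, and observe that at any point in the game the set of cop-free vertices has size at least $n-k = \bar\alpha n + 2$. The robber will maintain a cop-free set $A \subseteq V$ of fixed size $m$, where $m$ is the smallest \emph{even} integer with $m \geq \bar\alpha n$; in particular $m \leq \bar\alpha n + 2$, so such an $A$ always fits inside the cop-free vertices. The invariant is that the robber's current vertex lies in the (unique) connected component $C$ of $G[A]$ with $|C| > m/2$, whose existence is given by $\alpha$-robustness applied to $A$.

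Initially, since no cops are yet on the graph, the robber picks any size-$m$ set $A$ and places himself in the large component of $G[A]$. For the inductive step, if no cop moves, the robber can just move to a neighbor in $C$ (which exists because $|C| \geq m/2 + 1 \geq 2$). If a cop flies to the robber's vertex $v \in A$, the robber picks any cop-free vertex $z \notin A \setminus \{v\}$ (such a $z$ exists because the cop-free set is strictly larger than $A \setminus \{v\}$) and sets $A' := (A \setminus \{v\}) \cup \{z\}$, a new cop-free set of size $m$. By $\alpha$-robustness, $G[A']$ has a unique large component $C'$ with $|C'| > m/2$.

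The crucial step is to show that $v$ has a neighbor in $C'$. Since $m$ is even and $|C|, |C'| > m/2$ are integers, both are at least $m/2 + 1$, and hence $|C| + |C'| \geq m + 2 > m + 1 = |A \cup A'|$; this forces $C \cap C' \neq \emptyset$. Pick any $w \in C \cap C'$. Note $w \neq v$ because $v \notin A'$, and since $v$ and $w$ both lie in the component $C$ of $G[A]$, there is a path from $v$ to $w$ in $G[A]$. The second vertex $x$ of this path is a neighbor of $v$ lying in $A \setminus \{v\} \subseteq A'$; the tail of the path from $x$ to $w$ also lies in $A \setminus \{v\}$ and hence in $A'$, so $x$ and $w$ are in the same component of $G[A']$, i.e., $x \in C'$. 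The robber moves to $x$, and the invariant is maintained.

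The main obstacle is ensuring the overlap $C \cap C' \neq \emptyset$: if $m$ were odd, the bounds $|C|, |C'| > m/2$ would only yield $|C| + |C'| \geq m + 1 = |A \cup A'|$, which is no longer strict enough to force an intersection, and the strategy would break down. This is precisely why we insist on $m$ being even, and the $-2$ slack in the cop count is exactly what allows the robber to fit such an even $m \geq \bar\alpha n$ into the cop-free set regardless of the parity of $\bar\alpha n$.
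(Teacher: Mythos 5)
Your proof is correct and follows essentially the same strategy as the paper's: maintain a cop-free set of even size at least $\bar{\alpha}n$, use $\alpha$-robustness to get a component of size greater than half, and use the evenness together with the one-vertex swap to force $C\cap C'\neq\emptyset$. In fact you are slightly more careful than the paper in one spot, spelling out via the path-to-$w$ argument why the overlap lets the robber actually step from $v$ to a neighbor in $C'$.
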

\begin{proof}

Let $n=|V|$ and 
assume there are at most $\alpha n-2$ cops.
The robber can win with the following strategy, similar to the one used on the random graphs: he aims at always staying in the unique component of size greater than $|A|/2$ in a subgraph $G[A]$ for some cop-free set $A$ of size $|A|\ge \bar{\alpha} n$ such that $|A|$ is even. In the beginning of the game, this can easily be achieved. If some cop is placed on the current position of the robber, then we remove this vertex from $A$ and add some other cop-free vertex arbitrarily to obtain a set $A'$. Let $C$ resp.\ $C'$ be the vertex set of the largest component in $G[A]$ resp.\ $G[A']$. It remains to show that $C$ and $C'$ overlap. Assume they do not. Then $|C\cup C'| = |C|+|C'|$. By $\alpha$-robustness and the evenness assumption of $A$ and $A'$ we have
$|C| \ge |A|/2 + 1$ and $|C'| \ge |A'|/2+1$. As $A'$ (and thus $C'$) contains at most one vertex that is not in $A$, this is a contradiction.\qed
\end{proof}

For the proof that the union of three perfect matchings is $\alpha$-robust for sufficiently small $\alpha$, we will proceed by contradiction. Here the following lemma will come in handy. 

\begin{lemma}\label{lemma:3part}
Let $G=(V,E)$ be a graph on $n=|V|$ vertices that does not contain a component on more than $n/2$ vertices.  Then  there exists a partition $V=B_1\cup B_2 \cup B_3$ such that
\begin{list}{}{}
\item[$(i)$\hfill]$|B_i|\le n/2$  for all $i=1,2,3$ and
\item[$(ii)$\hfill]$E(B_i, B_j)=\emptyset$  for all $1\le i < j\le 3$.
\end{list}
\end{lemma}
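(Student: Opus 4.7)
The plan is to reduce the lemma to a bin packing problem and then solve it by a first-fit-decreasing (FFD) greedy argument. The key observation is that condition $(ii)$ forces each $B_i$ to be a union of connected components of $G$. Consequently the lemma is equivalent to the following statement: given the connected components $C_1,\ldots,C_k$ of $G$ with sizes $c_1\ge c_2\ge\cdots\ge c_k$ (each $c_i\le n/2$ by hypothesis), we can partition them into three groups of total size at most $n/2$ each.

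I would process the components in non-increasing order of size and, for each $C_i$, place it in the lowest-indexed bin $B_j$ that still has room, i.e.\ satisfies $|B_j|+c_i\le n/2$. The task is to show this procedure always succeeds. Assume for contradiction that $C_\ell$ is the first component that fails to be placed; then $|B_j|>n/2-c_\ell$ for $j=1,2,3$, and summing yields
\[
\sum_{i<\ell} c_i \;=\; |B_1|+|B_2|+|B_3| \;>\; \tfrac{3n}{2}-3c_\ell.
\]
Since the left-hand side is also bounded above by $n-c_\ell$, we obtain $c_\ell>n/4$. By monotonicity $c_1,\ldots,c_\ell$ are then all strictly larger than $n/4$, so if $\ell\ge 4$ their sum exceeds $n$, contradicting $\sum_i c_i=n$. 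In the remaining cases $\ell\in\{1,2,3\}$ at least one of the three bins is still empty when $C_\ell$ is processed, and since $c_\ell\le n/2$ the component must fit there, yielding another contradiction.

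The only delicate point is the volume calculation: the fact that items have size at most $n/2$ while bins have capacity $n/2$ is exactly what allows us to get away with three bins rather than four, and separating off the small cases $\ell\le 3$ is needed because the volume inequality alone does not yet produce an impossibility there. Beyond that bookkeeping I do not foresee any further obstacle, since once the problem has been translated into bin packing the argument is a short specialization of the standard FFD analysis.
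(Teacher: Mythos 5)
Your proof is correct: the reduction to packing the connected components into three bins of capacity $n/2$ is exactly the right reformulation, the volume inequality $n-c_\ell\ge\sum_{i<\ell}c_i>\tfrac{3n}{2}-3c_\ell$ does force $c_\ell>n/4$, and the split into the cases $\ell\ge 4$ (too much total volume) and $\ell\le 3$ (an empty bin remains, and $c_\ell\le n/2$ fits there) closes every case. The paper takes the same greedy spirit but a leaner route: it does not sort the components and needs no contradiction. It simply pours components into $B_1$ in arbitrary order until some component $C$ would push $|B_1|$ above $n/2$, puts that single component into $B_2$, and dumps everything else into $B_3$. Correctness is then immediate: $|B_1|\le n/2$ by construction, $|B_2|=|C|\le n/2$ by the hypothesis on component sizes, and $|B_3|=n-|B_1|-|C|<n/2$ precisely because $|B_1|+|C|>n/2$. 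So where your argument buys the (unneeded) extra fact that any first-fit-decreasing placement succeeds, at the cost of the FFD bookkeeping and case analysis, the paper's observation that the \emph{overflowing} component can be isolated in its own bin makes the third bin's bound fall out automatically and yields a three-line proof.
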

\begin{proof}
This follows straightforwardly from a greedy type argument. Consider the components of $G$ in any order. Put the components into a set $B_1$ as long as $B_1$ contains at most $n/2$ vertices. Let $C$ be a component whose addition to $B_1$ would increase the size of $B_1$ above $n/2$. Placing $C$ into $B_2$ and all remaining components into $B_3$ concludes the proof of the lemma. \qed
\end{proof}

We denote by $\perm(n)$ the number of perfect matchings  in a complete graph on $n$ vertices. For sake of completeness let us assume that in the case of $n$ odd we count the number of almost perfect matchings. We are interested in the asymptotic behavior of  $\perm(n)$. In fact, we only care on the behavior of the leading terms. With the help of Stirling's formula 
$$
n! = (1+o(1))\cdot \sqrt{2\pi n} \cdot (n/e)^n\enspace ,
$$
one easily obtains that
\begin{equation}\label{eq:perm}
\perm(n) =\frac{n!}{\lfloor n/2 \rfloor ! \cdot 2^{\lfloor n/2 \rfloor }}  = poly(n) \cdot \left( \frac{n}{e} \right)^{n/2},
\end{equation}
where here and in the remainder of this section we use the term $poly(n)$ to suppress  factors that are polynomial in $n$.

For any non-negative real numbers $x, y, z$ such that $x+y+z=1$, we define $H(x, y, z) = -x \log_2 x - y \log_2 y - z \log_2 z$.

\begin{lemma}\label{lemma:boring}
$$\min_{\substack{ 0 \leq x, y, z \leq 1/2\\ x+ y+ z=1}} H(x, y, z) = 1.$$
\end{lemma}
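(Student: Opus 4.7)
The plan is to exploit the strict concavity of $H$ on the probability simplex. Since the map $(x,y,z)\mapsto -x\log_2 x - y\log_2 y - z\log_2 z$ is strictly concave, its minimum over any compact convex polytope is attained at one (or several) of the vertices of that polytope. So it suffices to enumerate the vertices of the feasible region
\[
P = \{(x,y,z)\in\R_{\ge 0}^3 : x+y+z=1,\ x,y,z\le 1/2\}
\]
and evaluate $H$ at each one.

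First I would describe $P$ geometrically: it is the standard $2$-simplex with its three corners $(1,0,0)$, $(0,1,0)$, $(0,0,1)$ chopped off by the hyperplanes $x=1/2$, $y=1/2$, $z=1/2$ respectively. A short enumeration shows that each vertex of $P$ arises from making $x+y+z=1$ tight together with two of the six inequalities $x,y,z\ge 0$ and $x,y,z\le 1/2$. No variable can simultaneously hit $0$ and $1/2$; picking two lower bounds forces the third coordinate to equal $1$, which violates $\le 1/2$; and any other valid choice of two active inequalities forces the remaining coordinate to equal $1/2$ as well. Hence the vertices of $P$ are exactly the three permutations of $(0,1/2,1/2)$.

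Next I would evaluate $H$ at these vertices. By symmetry they all give the same value, and with the usual convention $0\log_2 0 = 0$,
\[
H\bigl(0,\tfrac12,\tfrac12\bigr) = -\tfrac12\log_2\tfrac12 - \tfrac12\log_2\tfrac12 = 1.
\]
Combining with the concavity argument above, $\min_{P} H = 1$.

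I do not expect any real obstacle: the argument is routine convex analysis. The only small care needed is (a) recording the convention $0\log_2 0 = 0$ so that the boundary points of $P$ are in the domain and $H$ is continuous there, and (b) checking the short vertex enumeration, which I would either dispatch by the case analysis sketched above or, equivalently, by observing that $P$ is the medial triangle of the simplex obtained by connecting the midpoints of its edges.
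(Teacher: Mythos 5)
Your argument is correct and is essentially the paper's own proof: both use concavity of $H$ to reduce the minimization to the vertices of the polytope, identify those vertices as the permutations of $(\tfrac12,\tfrac12,0)$, and evaluate $H$ there to get $1$. Your version just spells out the vertex enumeration and the $0\log_2 0=0$ convention in more detail.
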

\begin{proof}As $-x\log_2 x$ is concave, $H(x, y, z)$ must attain its minimum in a vertex of the simplex $0 \leq x, y, z \leq 1/2, x+ y+ z=1$, which, up to permutation of the variables, is given by $x=y=\frac{1}{2}$ and $z=0$.\qed
\end{proof}

We are now ready to prove Theorem  \ref{thm:random_matchings_exp_high_prob}, which implies Theorem \ref{thm:random_matchings}.
Let $n=\abs{V}$. In the light of Lemma~\ref{l:alpha:robust}, the theorem follows if we can show that $G=(V, E)=M_1\cup M_2 \cup M_3$ is $\alpha$-robust with exponentially high probability for some sufficiently small $\alpha>0$. By Lemma \ref{lemma:3part}, it suffices to show that, for all $C\subseteq V$ of size at most $\bar{\alpha}n=(1-\alpha) n$ and all partitions $B_1, B_2, B_3$ of $B=V\setminus C$ such that no set $B_i$ contains more than $\abs{B}/2$ vertices, the graph contains an edge that goes between two sets $B_i, B_j$ where $i\neq j$. 

Consider any such partition $B_1, B_2, B_3, C$. For each $i=1, 2, 3$, let $\beta_i = \abs{B_i}/\abs{B}$. Let $M$ be one uniformly chosen perfect matching. 
Let us estimate the probability that $E_M(B_i, B_j)=\emptyset$ for all $i\neq j$. First, condition on the set of edges $M'$ in $M$ with at least one end-point in $C$. These will connect to at most $\alpha n$ vertices in $B$. Let $B', B_1', B_2'$ and $B_3'$ respectively denote the subsets of vertices that remain unmatched. Hence, the remaining edges in the matching $M\setminus M'$ is chosen uniformly from all perfect matchings on $B'$.


We write $\beta_i' = \abs{B_i'}/\abs{B'}$. Clearly, if $\abs{B_i'}$ for some $i=1, 2, 3$ is odd, then $$\mathbb{P}( E_M(B_i, B_j)=\emptyset\;\forall i\neq j | M' )=0.$$ Otherwise, by \eqref{eq:perm}, we get
\begin{align*}
&\mathbb{P}( E_M(B_i, B_j)=\emptyset\;\forall i\neq j | M' ) = \frac{1}{\perm(\abs{B'})} \prod_{i=1}^3 \perm(\abs{B_i'})\\
&\qquad = poly(n) \cdot \left(\frac{e}{\abs{B'}}\right)^{\abs{B'}/2} \prod_{i=1}^3 \left(\frac{ \abs{B_i'}}{e}\right)^{\abs{B_i'}/2}\\
&\qquad = poly(n) \cdot \left(\beta_1'^{\beta_1'}\beta_2'^{\beta_2'}\beta_3'^{\beta_3'}\right)^{
\abs{B'}/2}\\
&\qquad = poly(n) \cdot 2^{-  H(\beta_1', \beta_2', \beta_3')\abs{B'}/2}.
\end{align*}
As $H(x, y, z)$ is uniformly continuous, we know that for any $\varepsilon>0$, there exists an $\alpha_0>0$ such that, for any $0 < \alpha < \alpha_0$, $\abs{(\abs{B}-\abs{B'})/n}$ and $\abs{\beta_i'-\beta_i}$ are sufficiently small that the above expression can be bounded by $2^{-H(\beta_1, \beta_2, \beta_3) \bar{\alpha} n/2 + \varepsilon n}$, where the choice of $\alpha_0$ holds uniformly over all $\beta_1, \beta_2, \beta_3$. As the above bound holds for any matching~$M'$, we get $$\mathbb{P}( E_M(B_i, B_j)=\emptyset\;\forall i\neq j) \leq 2^{-H(\beta_1, \beta_2, \beta_3) \abs{B}/2 + \varepsilon n},$$
for any partition $\{B_1, B_2, B_3, C\}$ as above.

We now do a union bound over all such partitions. For a given set $C$ and given sizes $b_1, b_2, b_3$ of the sets $B_1, B_2, B_3$, the number of choices for these sets is ${\abs{B} \choose b_1, b_2, b_3} = poly(n)\cdot 2^{H(\beta_1, \beta_2, \beta_3) \abs{B} }$ where $B=V\setminus C$. Moreover, by the above calculation, given a partition $B_1, B_2, B_3, C$ as above, the probability that $E(B_i, B_j)=\emptyset$ for a union of three independent uniformly chosen perfect matchings is at most $2^{-3H(\beta_1, \beta_2, \beta_3)  \abs{B}/2 + 3\varepsilon n}.$ This yields
\begin{equation*}
\begin{split}
&\mathbb{P}( \exists B_1, B_2, B_3, C :\,   E(B_i, B_j) = \emptyset\ \forall i\neq j)\\
&\qquad \leq poly(n) \cdot \sum_{\substack{B\subseteq V\\ \abs{B} \ge \bar{\alpha} n}} \sum_{ \substack{ 0\leq b_1, b_2, b_3\leq \abs{B}/2\\ b_1+b_2+b_3 = \abs{B}}} 2^{ -H(\beta_1, \beta_2, \beta_3) \abs{B}/2  + 3\varepsilon n}\\
&\qquad \leq poly(n) \cdot 2^{3\varepsilon n} \sum_{\substack{B\subseteq V\\ \abs{B} \ge \bar{\alpha} n}} \left( \max_{\substack{ 0 \leq \beta_1, \beta_2, \beta_3 \leq 1/2\\ \beta_1+ \beta_2+ \beta_3=1}} 2^{-H(\beta_1, \beta_2, \beta_3)}\right)^{\abs{B} / 2} \\
&\qquad \leq poly(n) \cdot 2^{3\varepsilon n} \sum_{\substack{B\subseteq V\\ \abs{B} \ge \bar{\alpha} n}} 2^{-\abs{B} / 2},
\end{split}
\end{equation*}
where the last line follows by Lemma \ref{lemma:boring}. The remaining sum can be rewritten as $\sum_{k=\lceil \bar{\alpha} n\rceil}^n {n\choose k} 2^{-k / 2}$. Assuming $\bar{\alpha}\ge \frac{1}{2}$, the summand is decreasing. Hence the sum is at most $poly(n)\cdot {n\choose \bar{\alpha} n} 2^{-\bar{\alpha} n/2}$. We conclude that
\begin{equation}\label{eq:ub3m} \mathbb{P}( G\text{ is not $\alpha$-robust})\le poly(n) \cdot 2^{ \left( H(\bar{\alpha})-\bar{\alpha}/2+3\varepsilon\right)n}.
\end{equation}
As $H(\bar{\alpha})-\frac{\bar{\alpha}}{2} = -\frac{1}{2} + H(\alpha) + \frac{\alpha}{2} \rightarrow -\frac{1}{2}$ as $\alpha\rightarrow 0$, we see that choosing $0<\varepsilon< 1/6$ and $\alpha>0$ sufficiently small, the right-hand side of \eqref{eq:ub3m} tends to $0$ exponentially fast in $n$, as desired. \qed

\subsection{Proof of Corollary \ref{cor:3-regular}, \ref{cor:cycle_matching} and \ref{cor:directed_blow_up}}\label{sec:proof_of_corollaries}

\begin{proof}[of Corollary \ref{cor:3-regular}] 
A standard method from random graph theory for the construction of regular graphs is the so-called configuration model introduced by Bollob\'as in~\cite{bollobas1980probabilistic}. Constructing a graph $G$ by taking the union of three independent uniformly chosen random perfect matchings $M_1, M_2, M_3$ is equivalent to constructing a $3$-regular random graph with the configuration model and conditioning that no self-loops appear. Since conditioning that no self-loops appear increases the probability of producing a simple graph (because all graphs with self-loops are not simple), Corollary~\ref{cor:3-regular} thus follows immediately from~\cite{bollobas1980probabilistic}.

For sake of completeness we also give a direct proof. 
By symmetry the probability that a given edge is contained in a random perfect matching is exactly $\frac1{n-1}$. The expected number of edges common to $M_1$ and $M_2$ is thus $\frac{n}{2(n-1)}$. Hence, by Markov's inequality, with probability at least $1-\frac{n}{2(n-1)}\approx\frac12$, $M_1$ and $M_2$ are disjoint. Assuming the two first matchings are disjoint, we uniformly choose one pair of vertices $\{u, v\}$ to form an edge in $M_3$. With probability $1-\frac{2}{n-1}$, this edge is not in $M_1\cup M_2$, and hence shares one end-point with exactly $4$ of the $n$ edges in $M_1\cup M_2$. Adding the remaining $\frac{n}{2}-1$ edges to $M_3$, the expected number of these that are contained in $M_1\cup M_2$ is $\frac{n-4}{n-3} = 1-\frac{1}{n-3}$. Again by Markov's inequality, this means that $M_3$ is disjoint $M_1\cup M_2$ with probability at least $\frac{1}{n-3}$. It follows that $M_1, M_2, M_3$ are pairwise disjoint with probability at least $\Omega\left(\frac{1}{n}\right)$. As Theorem \ref{thm:random_matchings_exp_high_prob} holds with exponentially high probability, for sufficiently large $n$, we can find disjoint matchings $M_1, M_2, M_3$ such that $\alpha n$ cops do not suffice to catch the robber on $G=M_1\cup M_2 \cup M_3$.\qed

\end{proof}

\begin{proof}[of Corollary \ref{cor:cycle_matching}]
Let $M_1,M_2,M_3$ be random perfect matchings chosen  independently and uniformly. We claim that $M_1\cup M_2$ is a Hamiltonian cycle with probability at least $\frac{1}{n-1}$. Therefore,  the graph $G= M_1\cup M_2\cup M_3$ is with probability at least $\frac{1}{n-1}$ a union of a Hamiltonian cycle and a random matching. Since  Theorem \ref{thm:random_matchings_exp_high_prob} holds with probability $1-e^{-\Omega(n)}$,  it holds with high probability that $\alpha n$ cops do not suffice to catch the robber on the graph $G=H\cup M$, where $H$ is an Hamiltonian cycle and $M$ is a random perfect matching.

To see why the claim holds, note that there are $\frac12(n-1)!$ Hamiltonian cycles, each of which can be written as a union $M_1\cup M_2$ of two perfect matchings in two ways, and $(n-1)\cdot(n-3)\cdot\ldots \cdot 1$ perfect matchings. Therefore the probability that the union of two random matchings is Hamiltonian is 
\begin{eqnarray*}
\frac{2\cdot\frac12(n-1)!}{((n-1)\cdot(n-3)\cdot\ldots \cdot 1)^2}=  \frac{1}{n-1} (n-2) \frac{1}{n-3} (n-4)\cdots 2 \frac{1}{1} > \frac{2}{n-1} \ .
\end{eqnarray*}\qed

\end{proof}

\begin{proof}[of Corollary \ref{cor:directed_blow_up}]
Choose an even $n$ large enough. By Theorem \ref{thm:random_matchings}, there exist three perfect matchings $M_1,M_2,M_3$ on $n$ vertices such that $\alpha n$ cops do not suffice to catch the robber on the graph $G'=M_1\cup M_2 \cup M_3 $. We construct the graph $G$ in the following way.  We label the $6n$ vertices by $v_{ij}$ for $1\leq i \leq n$ and $1\leq j \leq 6$. For all $i$, we connect $v_{i1}, \ldots , v_{i6}$ as a directed $6$-cycle. For each edge of one of the matchings, we will connect two of these cycles. More precisely, let $e \in M_k$ and suppose that $e$ connects vertex $i$ and $j$ in $G'$. Then, we add the directed edges $(v_{ik}, v_{j(k+3)})$ and $(v_{jk},v_{i(k+3)})$ to $G$.


We call a cycle $i$ \emph{free} if no cop is on this cycle, and \emph{occupied} otherwise. If the robber enters a free cycle, then he can reach any vertex of the cycle and while doing so, the cops cannot occupy any vertex outside of the cycle.  Consider a situation of the game on graph the $G$ with occupied cycles $F\subset [n]$, where the robber enters a free cycle $i$. This corresponds to the situation on the graph $G'$ with occupied vertices $F$ where the robber enters vertex $i$. On this graph the robber moves according to its winning strategy to a free vertex $j$ with $(i,j)\in M_k$ for some $1\leq k\leq 3$. On the graph $G$ the robber moves first along the cycle $i$ to vertex $v_{ik}$ and then enters the free cycle $j$ via the edge $(v_{ik},v_{j(k+3)})$. Therefore, any winning strategy for the robber on graph $G'$ with $\alpha n$ cops gives a winning strategy for the robber on graph $G$ with $\alpha n$ cops.\qed
\end{proof}

\subsection{Proofs of upper bounds}\label{sec:proofs_upper_bounds}

\begin{proof}[of Proposition \ref{prop:DGn}]
Consider the graph $DG_{n}$ on $2n$ vertices consisting of a Hamilton cycle $u_1, u_2, \ldots, u_n$, $v_1, v_2, \ldots , v_n$ and $n$ additional edges $(u_i,v_i)$ for $1\leq i\leq n$. We call these additional edges \emph{diagonal edges} and the other edges \emph{cycle edges}.
Note that every $DG_n$ graph is $3$-regular. Since the graph cannot be disconnected by removing $2$ vertices, the graph $DG_n$ is also $3$-connected. Furthermore any two pairs of opposite vertices $(u_i, v_i), (u_j,v_j)$ for $i\not= j$ split the graph into two connected components. This structure turns out to be very useful for the cops. By occupying four such vertices the robber is trapped in the connected component he is in.

As in real life, our cops never come alone. Consider three pairs of cops $c_i, c_i'$ for $1\leq i \leq 3$.  In order to catch the robber it suffices that each pair of cops can execute the command \emph{chase robber}. If a pair $i$ of cops is told to chase the robber, $c_i$ and $c_i'$ alternate in flying to the robbers location. This ensures that the robber can never move to its previous location. The first time the robber uses a diagonal edge, this pair of cops  blocks the diagonal, i.e., it stays on the two endpoints of the diagonal edge until it receives the command to chase the robber again. Note that if there is a third cop placed somewhere on the graph, then the robber is forced to use a diagonal edge after less than $2n$ steps. 

In the beginning of the game, when the robber has chosen its starting position, cop $c_3$ flies to the position of the robber. Then, cop pair $c_1, c_1'$ chases the robber. The first time the robber uses a diagonal edge, these cops block the diagonal, and the second pair of cops starts chasing the robber. When the robber uses a diagonal edge again, this pair of cops blocks that diagonal and the third pair of cops $c_3,c_3'$ starts chasing the robber. 

From now on there will  always be two cop pairs blocking two diagonals. Therefore, the robber cannot leave the area between these two diagonals. The remaining pair of cops chases the robber until he moves along a diagonal edge; this diagonal is subsequently blocked by this pair of cops. Note that the robber is now in the component defined by this diagonal and \emph{one of} the two previously blocked diagonals. Correspondingly, one of the two cop pairs is not needed anymore and this pair of cops takes on the chase. The size of the entangled component  the robber is in  decreases by at least $2$  every time the robber uses a diagonal edge. When the component has size $2$ the robber is caught.\qed
\end{proof}

\begin{proof}[of Theorem \ref{thm:upper_bound}] 
Our main approach is the following. First, we identify a set $A$ of size $\lfloor \frac{n}{4}\rfloor$ with the property that  $4$ cops suffices to catch the robber on $G\setminus A$. Then $\lfloor \frac{n}{4}\rfloor+4$ cops can catch the robber using the following strategy. As long as the robber plays on the vertices in $G\setminus A$, we follow the optimal strategy on $G\setminus A$ using at most $4$ cops. Unless this catches the robber, he must at some point move to a vertex in $A$. If that happens, place one cop there permanently and repeat. Eventually, the robber runs out of non-blocked vertices in $A$ to escape to and gets caught. 

In order to identify the vertices that we want to place in $A$, we observe the following. Assume that the strategy on $G\setminus A$ for the cops is such that there will always be cops placed at the robbers last two positions, e.g. two cops take turns to drop on the robber. Then in order to catch the robber in  $G\setminus A$, we can ignore all vertices that have degree one in $G\setminus A$ (as the robber will be caught, if he moves to such a vertex) and replace paths in which all internal vertices have degree two by a single edge  (as two cops suffice to chase the robber along such a path). 

To formalize this, we consider two operations to reduce a graph $G$: $(i)$ remove any degree $1$ vertices, $(ii)$ replace any degree two vertices by an edge. Note that the second operation could create loops or multiple edges, so a reduced graph may not be simple. Moreover, reducing a max degree $3$ graph as far as possible will result in  components that are non-simple $3$-regular or a vertex or a loop. 

We now use these operations to define the set $A$. Pick any vertex of degree 3 and remove it from $G$. This decreases the number of degree $3$ vertices by $4$. Reduce $G$ as far as possible. If the remaining graph has a vertex  with $3$ distinct neighbors, remove it from $G$, and reduce as far as possible. Again, this decreases the number of degree $3$ vertices by $4$. Repeat until no vertex in $G$ has more than $2$ distinct neighbors. Let $A$ be the set of removed vertices. Then $|A|\le \lfloor \frac{n}{4}\rfloor$.

It remains to show that $4$ cops can catch the robber on $G\setminus A$. By construction, any connected component of $G\setminus A$ can be reduced to either a loop, a vertex, or a (non-simple) graph where all vertices have degree $3$ but at most two neighbors. It is easy to see that the only way to satisfy these properties is for each vertex to either be incident to one loop and one single edge, one single edge and one double edge, or one triple edge. Thus, the only possible graphs are $(a)$ an even cycle where every second edge is double, $(b)$ a path where every second edge is double and where the end-points have loops attached, or $(c)$ two vertices connected by a triple edge. We leave it as an exercise to see that the robber can be caught on any such graphs using at most $4$ cops. Note that, due to $(ii)$, it might be possible for the robber to have the same position at times $t$ and $t+2$ by following a pair of edges between the same two vertices, but this can be prevented using one of the cops. \qed

\end{proof}

\section{Conclusion} \label{sec:Conclusions}
We have shown that there are many  graphs with maximum degree three for which the entanglement number is of linear size, in the directed cases as well as in the undirected case.
This shows that the freedom of the cops of being able to \emph{fly} to any vertex is not helpful when they are only allowed to fly to the current position of the robber. In other words, they should not only {\em follow} the robber, but they should think ahead of where the robber might want to go, as the title of our paper indicates.

All our examples were found by taking the the union of three random matchings. 
We have shown that there exists an $\alpha$ such that with high probability, the robber can run away from $\alpha n$ cops. We also showed that $\lfloor\frac14n\rfloor + 4$ cops do suffice.
We leave it as an open problem to determine  the exact value of $\alpha$.

\subsubsection*{Acknowledgments}
We thank Malte Milatz for bringing this problem to our attention.



\begin{footnotesize}
\bibliographystyle{plain}
\bibliography{sources}
\end{footnotesize}

\end{document}